\definecolor{matlabcolor1}{rgb}{0, 0.4470, 0.7410}
\definecolor{matlabcolor2}{rgb}{0.8500, 0.3250, 0.0980}
\definecolor{matlabcolor3}{rgb}{0.9290, 0.6940, 0.1250}
\definecolor{matlabcolor4}{rgb}{0.4940, 0.1840, 0.5560}
\definecolor{matlabcolor5}{rgb}{0.4660, 0.6740, 0.1880}
\begin{document}

\title{Impulsive fire disturbance in a savanna model:}

\subtitle{Tree-grass coexistence states, multiple stable system states, and resilience}


\author{Alanna Hoyer-Leitzel         \and 
        Sarah Iams 
}


\institute{Alanna Hoyer-Leitzel \at
              Department of Mathematics and Statistics, Mount Holyoke College, 50 College Street, South Hadley, MA 01075 \\
              \email{ahoyerle@mtholyoke.edu}           
           \and
           Sarah Iams \at
            John A. Paulson School of Engineering and Applied Sciences, Harvard University, Cambridge, MA 02138 \\
}
\date{Received: date / Accepted: date}

\maketitle

{\color{blue} }
{\color{red} }

\begin{abstract}
Savanna ecosystems are shaped by the frequency and intensity of regular fires. 
We model savannas via an ordinary differential equation (ODE) encoding a one-sided inhibitory Lotka-Volterra interaction between trees and grass. By applying fire as a discrete disturbance, we create an impulsive dynamical system that allows us to identify the impact of variation in fire frequency and intensity.
The model exhibits three different bistability regimes: between savanna and grassland; two savanna states; and savanna and woodland. 
The impulsive model reveals rich bifurcation structures in response to changes in fire intensity and frequency -- structures that are largely invisible to  analogous ODE models with continuous fire.
In addition, by using the amount of grass as an example of a socially-valued function of the system state, we examine the resilience of the social value to different disturbance regimes.  We find that large transitions (``tipping'') in the valued quantity can be triggered by small changes in disturbance regime.

\keywords{savanna \and resilience \and impulsive differential equations \and transient dynamics \and bistability \and tipping points}
\end{abstract}

\section{Introduction}
\label{intro}


Given the ecological, economic and cultural value of savannas as well as their precarious ecological role, savanna ecosystems are a frequent target of modeling investigations \cite{accatino2010tree,batllori2015minimal,baudena2010idealized,beckage2009vegetation,yatat2021minimalistic,goel2020dispersal,patterson2020probabilistic,ratajczak2017enemy,schertzer2015implications,staver2011tree,tamen2016tree,tamen2017minimalistic,touboul2018complex,wuyts2019tropical,yatat2018spatially}.  In tree-grass-fire interaction models of savanna ecosystems, the impact of fire on tree and grass biomass is often represented as a continuous mortality in ordinary differential equation (ODE) models \cite{accatino2010tree,beckage2009vegetation,yatat2021minimalistic,staver2011tree,touboul2018complex} and in spatial partial differential equations (PDE) models \cite{goel2020dispersal,wuyts2019tropical,yatat2018spatially}.  Continuous fire is an obvious simplification and in this paper we show that it is not equivalent to similar models with discrete fire.
Fire is also sometimes represented via a more complicated stochastically applied loss term \cite{batllori2015minimal,baudena2010idealized,patterson2020probabilistic}.  Stochastic loss terms lead to some models that are difficult to fully analyze. However, \cite{patterson2020probabilistic} shows that stochastic loss terms can be used to underpin continuum models, giving a nuanced relationship between the two types of models. 

In this paper, we use an impulsive modeling approach, where fire is imposed deterministically as a discrete disturbance on the state of the system.  This is similar to a model presented in \cite{tamen2016tree}.  This choice combines benefits (and shortcomings) of continuous and stochastic fire models.  The deterministic nature of fire in the impulsive model 
makes the model more easily amenable to analysis.  The assumption of periodic fire remains a simplification.  The discrete nature of disturbances in the impulsive model and in some stochastic models mimics the difference in time scales between fire and tree growth, and even between fire and grass growth.  Moreover, the discrete structure describes a fire disturbance more fully, making it possible to separate the impacts of fire timing and fire intensity.  

An impulsive differential equation has three parts: (1) a continuous differential equation that governs the system except for at a sequence of discrete impulses. (2) the impulse function describing the jumps or kicks in the state of the system; and (3) impulse mechanisms that describe when the impulses happen, either at certain times, at certain states or both \cite{roup2003limit}. The impulsive differential equation in this paper involves a Lotka-Volterra system with interacting grass and tree populations and a periodic, state-dependent impulse mechanism to represent fires.  In a non-impulsive continuous system, long-term behavior may approach a fixed amount of trees and grass.  The system might also exhibit oscillations.  In an impulsive system, the long-term behavior is time periodic whenever the impulse is non-zero.  For example, in this model, we see the impulse of fire mortality, and then slower recovery of the grass and trees.  It is possible to think of the long-term behavior in the non-impulsive system as capturing a time-average that smooths out the rapid changes driven by impulses in the impulsive system, or vice versa.  For example, Tamen et al \cite{tamen2016tree} work with a dimensionalized version of the model in this paper, and they show that parameters of the impulsive model can be adjusted so that the time average of the impulsive results are a good representation of the mean-fields for coexistence of grass and trees in a savanna in three different regions of Cameroon.

Using a time periodic impulse makes it possible to analyse the impulsive differential equation as a map from one post-kick state to the next post-kick state.  We refer to this map as a \textit{flow-kick map} \cite{meyer2018quantifying}.  We refer to the set of parameters describing the impulse as a \textit{disturbance regime}.  Figure \ref{fig:ImpulsiveDefinition} shows two time series for a one-dimensional impulsive differential equation. We see the flow (solid lines) and proportional removal of the state (impulse/kick, dashed lines). A fixed point of the flow-kick map (red disk in Figure \ref{fig:ImpulsiveDefinition}) corresponds to a periodic solution of the impulsive differential equation (red line segment in Figure \ref{fig:ImpulsiveDefinition}). Time series and phase planes for a two-dimensional impulsive system are shown in Figure \ref{fig:timeseriesphaseportrait}.

 Observations suggest that savanna systems are vulnerable to woody encroachment \cite{archer2017woody,stevens2017savanna}, potentially shifting the system from a savanna to a forest state, and reducing its value as rangeland for livestock.  Studies on bistability between savanna and forest \cite{staver2011tree,staver2011global} suggest that fire is an important mechanism in determining which state dominates.  This raises concerns about the resilience of the savanna ecosystem to differing fire disturbance regimes.    
Naively, resilience of a valued state of a system to a change in parameters is linked to the bifurcation structure of the system \cite{meyer2016mathematical,meyer2018quantifying}.  However, the changes in long term behavior that are identified via a bifurcation analysis may not directly relate to aspects of the system that are targets for preservation or management \cite{zeeman2018resilience}.  A function of the system state that is of social value may be a more appropriate focus for analysis.  For example, the forage value of grass in a savanna could be a conservation target.  Such a \emph{socially valued function} may respond to the changes in system state that occur at bifurcations, or may indefinitely remain within a socially valued region even as the system undergoes a bifurcation.
Bifurcation analyses typically focus on long term asymptotic limits of the system behavior.  We also include shorter term transient behaviors in this resilience analysis, potentially making it more relevant for decision support \cite{morozov2020long}.



 In this paper, we introduce (section \ref{sec:modelmethods}) and analyze (section 
 \ref{sec:results}) an impulsive savanna model, then identify the resilience of high-grass states (section \ref{sec:discussion}).  
  The model exhibits three bistability regimes: savanna and grassland; high and low tree density savanna; savanna and woodland (section \ref{sec:results}, see figure \ref{fig:four tau k1 stab} and table \ref{table:stability_regions}).  
  We use numerical methods (section \ref{sec:methods-numerical}) to generate some bifurcation results (section \ref{sec:numbifn}) and use analytical techniques to extend the stability analysis of the system in the disturbance parameter space. Section \ref{sec:BifThms} contains analytical proofs which correct some erroneous results in \cite{tamen2016tree}.
 We compare the bifurcation structure of the impulsive system to related continuous systems (section \ref{sec:ctsdisturb}) to show the intrinsic limitations of the disturbance regimes in a continuous system.
Using the amount of grass as a socially-valued function of the system state, we examine the resilience of that social value to varying disturbance (section \ref{sec:resanalysis}).  At some locations in the parameter space, small changes in the disturbance pattern lead to irreversible changes in the valued function.  This could be considered ``tipping'' (an irreversible change in state), and corresponds to woody encroachment into the savanna. 
The mathematical results presented in section \ref{sec:results} focus on long term predictions of the model. By considering how these results apply on shorter timescales we make the analysis relevant to timescales of human management (section \ref{sec:timescale}). Section \ref{sec:context} contextualizes the results of this model and section \ref{sec:future} offers future directions. This work shows there is value in using deterministic discrete fire disturbance models for insight into savanna mechanisms.

\begin{figure}
  \includegraphics{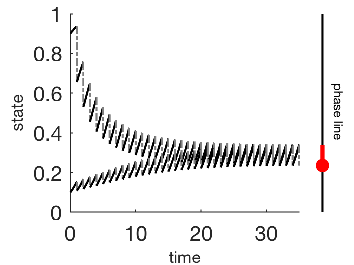}
\caption{The plot shows two time series, with different initial conditions, for an impulsive differential equation.  
In this example, the state variable flows according to $\dot x = \frac{1}{2}x(1-x)$ for a period of one time unit.  The state variable then experiences an impulse (kick) proportional to its state at the end of the flow period, $\Delta  x(t_n) = x(t_n^+) - x(t_n^-)=-0.3x(t_n^-)$ with $t_n = n$. The two time series are converging to an asymptotically stable periodic solution of the impulsive system, which is shown in red on the phase line.  
The red dot denotes the fixed point of the associated flow-kick map. }

\label{fig:ImpulsiveDefinition}       
\end{figure}

\section{Model and Methods}
\label{sec:modelmethods}

In this section, we introduce a Lotka-Volterra system along with an impulsive fire disturbance.  This system is closely related to the impulsive system introduced and analyzed in \cite{tamen2016tree,tamen2017minimalistic}.
We also present the numerical methods we use to identify fixed points and bifurcations of the impulsive system.

\subsection{Impulsive Lotka-Volterra Model}
\label{sec:model_intro}

\subsubsection{Underlying continuous tree-grass dynamics}

We consider a model, given in Equation \ref{eqn:system}, based on a competitive Lotka-Volterra system for trees, $T$, and grass, $G$, in a savanna. Competition is one-sided, with competition between trees and grass only affecting grass.  $G$ represents the biomass of grass, $T$ the biomass of trees, and $t'$ is time, measured in years.
\begin{equation}
    \begin{split}
    \label{eqn:system}
        \frac{dG}{dt'}&=\gamma_GG (1-\frac{G}{K_G}) - \gamma_{TG} GT\\
\frac{dT}{dt'}&=\gamma_T T(1-\frac{T}{K_T})
    \end{split}
\end{equation}
$K_G, K_T$ are the carrying capacities of grass and trees respectively, $\gamma_G, \gamma_T$ are growth rates for grass and trees respectively, and $\gamma_{TG}$, which we usually assume to be positive, is the parameter associated with the one-sided inhibitory interaction on grass by trees. The nondimensionalized version of this system is given in Equation \ref{eqn:nondim}.
\begin{equation}
    \begin{split}
    \label{eqn:nondim}
        \frac{dx}{dt}&=f(x,y) = x(1-x-\alpha y)\\
           \frac{dy}{dt}&=g(x,y) = \delta y(1-y)
    \end{split}
\end{equation}
In the nondimensional system, $x=G/{K_G}$, $y=T/{K_T}$, $t=\gamma_Gt'$, $\alpha = (K_T\gamma_{TG})/{\gamma_G}$, and $\delta = \gamma_T/{\gamma_G}$ 

We call equation \ref{eqn:nondim} the \emph{underlying system}.  This one-sided Lotka-Volterra system has four fixed points.  The fixed points and their stability are shown in Table \ref{tab:fp}. We use $\alpha = 0.8$, $\delta = 0.6$ through the analysis in this paper. These choices of $\alpha$ and $\delta$ correspond to the parameter set in Figure 5C of \cite{tamen2016tree}.

\begin{table}
\begin{center}
\begin{tabular}{| l | l | l |}
\hline
Fixed point & Description & Stability \\
\hline\hline
 $(0,0)$ & desert state & unstable for $\delta > 0$ \\
 \hline
  $(1,0)$ & grass-only state & saddle for $\delta > 0$ \\
  \hline
  $(0,1)$ & forest-only state & saddle for $\delta > 0, 0 < \alpha < 1$ \\
  & & stable for $\delta > 0$, $1 < \alpha$ \\
  \hline
    $(1-\alpha,1)$ & forest-grass state & stable for $\delta > 0, 0 < \alpha < 1$ \\
  & & saddle for $\delta > 0$, $1 < \alpha$ \\
  \hline
\end{tabular}
\caption{Fixed points and their stability for the underlying Lotka-Volterra system, given in Equation \ref{eqn:nondim}. This system never exhibits bistability.  For $\alpha = 0.8, \delta = 0.6$ (the parameter set used in this paper) the forest-grass coexistence state is the only stable fixed point of the underlying system. See Table \ref{table:stability_regions} for the wider range of stable and bi-stable states that exist in the impulsive system. 
\label{tab:fp}}
\end{center}
\end{table}


\subsubsection{Impulsive fire disturbances}

To encode fire in the system, we use impulses, referred to as \emph{kicks}.  These instantaneously adjust the state of the system, representing the impact of fire.  In the dimensional system, the impulse occurs with a period of $\tau'$ years, which is equivalent to a period of $\tau=\gamma_G\tau'$ in the nondimensional system.

The impact of fire on the grass state is modeled as a proportional kick, so that for each fire, some proportion, $k_1$, of the grass dies.  The mathematical statement encoding this impulse is given in Equation \ref{eqn:grass}.  
\begin{equation} 
\label{eqn:grass}
\Delta x(n\tau)=x(n\tau^+)-x(n\tau^-)= -k_1x(n\tau^-)
\end{equation}
For the tree state, fire mortality is modeled as both proportional loss of the pre-fire tree biomass, with proportion $k_2$, and as dependent on the amount of grass that burns.  The maximum tree mortality would be $k_2 y(n\tau^-)$, but it is modulated by a ``switching function", $\omega(k_1 x(n\tau^-))$, that specifies a proportion of the maximum mortality that will occur.  The mathematical statement encoding this impulse is given in Equation \ref{eqn:trees}.
\begin{equation}
\label{eqn:trees}
\Delta y(n\tau)=y(n\tau^+)-y(n\tau^-)=-k_2\ \omega(k_1x(n\tau^-))y(n\tau^-)
\end{equation}
We use 
\begin{equation}
    \label{eqn:omega}
    \omega(\xi) = \dfrac{\xi^2}{a^2+\xi^2},
\end{equation} where the parameter $a$ acts as a threshold setting the transition from input values associated with $\omega < \frac{1}{2}$ to input values associated with $\omega > \frac{1}{2}$. 
This means that when the amount of grass biomass burned, $k_1 x$, is less than $a$, the tree mortality is less than half its maximum, so $\vert\Delta y(n\tau)\vert < \frac{1}{2} k_2 y(n\tau^-)$, while when $k_1 x > a$, the tree mortality is more than half its maximum, $\vert\Delta y(n\tau)\vert > \frac{1}{2}k_2 y(n\tau^-) $. We use $a=0.08$ for the analysis in this paper. Such a switching function is motivated by percolation models for fire spread (e.g. \cite{schertzer2015implications}) and by empirical studies that show fire incidence has a threshold dependence on flammable cover (e.g. \cite{archibald2009limits}).

An example of this system is given in Figure \ref{fig:timeseriesphaseportrait}. Subfigures A and B show time series for the grass and trees from two different initial conditions. Subfigures C and D show the phase portrait; Subfigure D also includes the basins of attraction for the different stable periodic equilibria, and a ``slow region" to which solutions converge quickly, and then move along more slowly. This slow region is determined by the stable and unstable manifolds of the stable and saddle points, respectively. The method for determining the basins of attraction and slow region are given in Section \ref{sec:methods-numerical}.

\begin{figure}[h]
\centering
\includegraphics[]{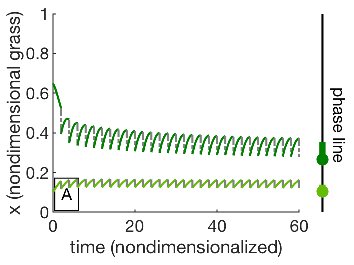}
\includegraphics[]{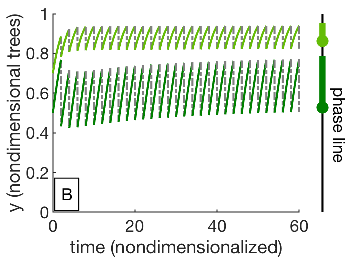}
\includegraphics[]{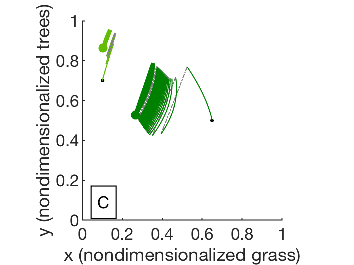}
\includegraphics[]{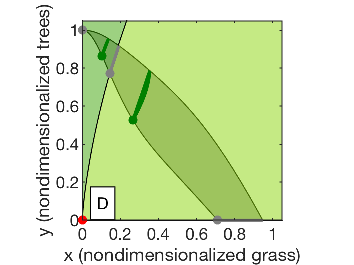}
\includegraphics[trim = .1in 1.5in .1in .55in, clip]{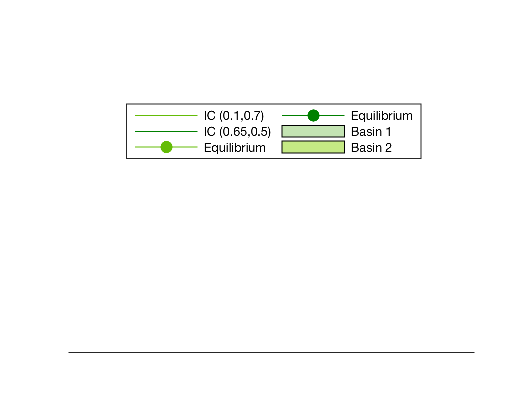}
\caption{\textbf{(A)} shows the time series for $x$ (grass) for two initial conditions of the impulsive differential equation. The long term flow-kick equilibria are marked on a phase line to the right.  The forest-dominant state is shown in light green and the grass-dominant in dark green. \textbf{(B)} shows the corresponding time series for $y$ (trees). \textbf{(C)} shows the parameterization of these two time series in phase space, again for 30 iterations of the flow-kick map. The dots denote fixed points of the flow-kick map, the curve along with the dot shows the periodic equilibrium of the impulsive system, and the straight line shows the action of the kick. In phase space, we observe rapid convergence to the unstable manifold of a saddle point, and slower movement along that manifold. \textbf{(D)} shows the basins of attraction and the six equilibria in the system. There are three saddle points: a forest-only state ($(0,1)$), a periodic grass-only state, and a savanna state. There are two stable attracting fixed points (both are tree-grass coexistence states). The region associated with the unstable manifold of the central saddle point is shaded a slightly darker color. Solutions converge quickly to this region and more slowly evolve towards their respective attracting fixed point. See \ref{sec:methods-numerical} for details on how the basins of attraction where calculated. 
The disturbance parameter set used to generate this figure is $k_1=0.25$, $k_2=0.6$, and $\tau=2$. }
\label{fig:timeseriesphaseportrait}   
\end{figure}

\subsection{Numerical Methods}
\label{sec:methods-numerical} 
We simulate the kick-flow map in Matlab using \texttt{ode45} to numerically integrate the differential equations from Equation \ref{eqn:nondim}, and then apply the kicks from Equations \ref{eqn:grass} and \ref{eqn:trees}. Stable fixed points can be found numerically by iterating the kick-flow map. We use a maximum of 300 iterations or convergence of the map to within $10^{-6}$ of the previous step.

We can also simulate the system using MatContM. Within MatContM, we encode the flow-kick map described in sections \ref{sec:model_intro}, rather than encoding a closed form version of flow-kick map (equation \ref{eqn:flowkickmap}). To compute fixed point continuations, identify bifurcations, and continue bifurcation numerically, MatContM \cite{meijer2017matcontm} requires Jacobian, Hessian, and third derivative information of the map. We generate these numerically, by running the flow-kick map from perturbed initial conditions.

Fixed point continuations in MatContM are particularly useful for finding saddle and unstable fixed points that we show in the phase portraits in Figure \ref{fig:timeseriesphaseportrait} and in the Appendix.  After finding a stable fixed point, we can continue the fixed point through bifurcations to find the saddle and unstable fixed point curves. The variable stepsize used by MatcontM means that the fixed points for an exact parameter set might not be included in the curve, so we use a newton solver to find a fixed point specifically chosen parameter values. For a saddle point, we then use the linearization of the map (calculated numerically) to identify points near the stable and unstable manifolds. We iterate points on the unstable manifold forward in time and the stable manifold backwards in time to find the boundaries of the slow region.

\section{Results}
\label{sec:results}

In this section, we examine the bifurcation and stability structure of the model. We look at this structure in the $\tau k_1 k_2$-disturbance space, considering only changes to fire intensity and frequency, and not changes to the underlying system. As shown in \cite{tamen2016tree}, for all values of $k_1$, $k_2$, and $\tau$, two fixed points always exist at $(0,0)$ and at $(0,1)$.  At $(0,0)$, with both vegetation types extinct, the fixed point is unstable (saddle or repellor).  At $(0,1)$, the fixed point corresponds to a  forest-only or woodland state, where grass is extinct and trees are at their carrying capacity. Numerically, we find there are up to four other fixed points in the system, depending on the disturbance regime.  See section \ref{sec:appendix} for detailed phase portraits of possible cases. 

In changing the disturbance regime, we observe various bistabilities, one of which was described in \cite{tamen2016tree}, and others that are novel.  These bistabilities are not present in the underlying continuous system, or in the continuous analog to the impulsive system (see Sections \ref{sec:ctsdisturb} and \ref{sec:ctsdistrubdiscuss}). The bistabilities mimic ecological bistability observations \cite{staver2011global,staver2011tree}. In the model, there are a total of seven qualitatively different stability and bistability regimes (see figure \ref{fig:four tau k1 stab}).  These different combinations of stable grass-only, forest-only, and savanna coexistence states and are shown in Table \ref{table:stability_regions}.

Using MatContM, we numerically identify the locations of transcritical and limit point (saddle-node or fold) bifurcations.  These bifurcations curves separate different bistability regions.  The curves are described in Section \ref{sec:numbifn} and shown in Figures \ref{fig:bifthms}, \ref{fig:four tau k1 stab}.  In Section \ref{sec:BifThms}, we use an algebraic condition that must be met when two fixed point branches cross to identify analytic expressions for the transcritical bifurcation curves that had been identified numerically in Section \ref{sec:numbifn}.

 \subsection{Numerically identified bifurcations}
\label{sec:numbifn}

\begin{figure}
\centering
\includegraphics[]{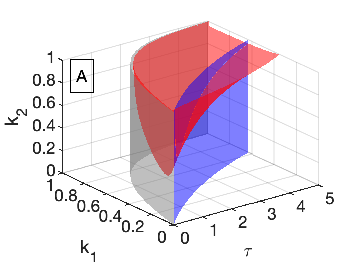}
\includegraphics[]{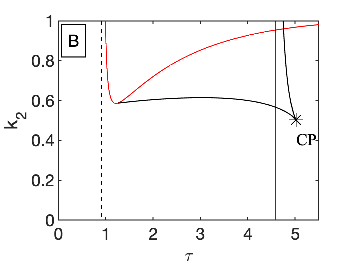}
\includegraphics[]{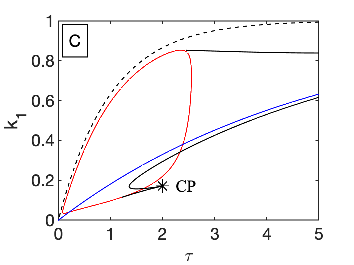}
\includegraphics[]{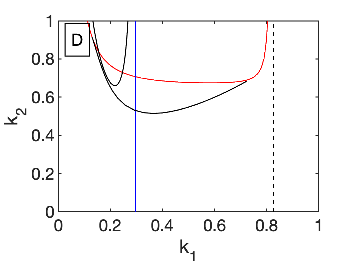}

\includegraphics[trim = .1in 1.25in 0in .5in ,clip]{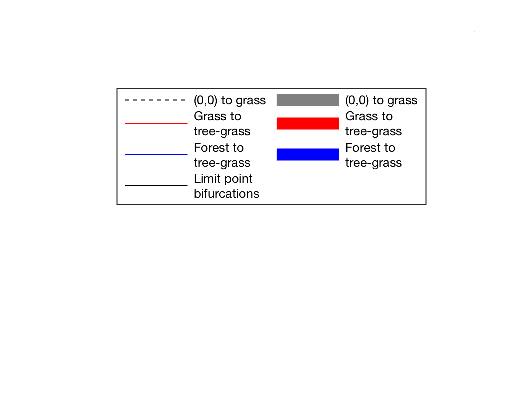}

\caption{ 
\textbf{(A)} shows bifurcation surfaces in disturbance space ($k_1,k_2,\tau$-space) for three analytically derived bifurcations. The transcritical bifurcation where a nontrivial grass-only fixed point, a grassland state, comes into existence as a non-attracting state (equation \ref{eqn:grass-fp existence}), is shown in gray.  The red surface shows the transcritical bifurcation between a grass-only fixed point and a tree-grass coexistence fixed point (equation \ref{eqn:thmgrassbp}), which can be thought of as a savanna state. In blue is the transcritical bifurcation between a forest-only fixed point and a tree-grass coexistence fixed point (equation \ref{eqn:thmforestbp}). Note that both the grass-only bifurcation and forest to savanna bifurcation points are not dependent on $k_2$. Figures B,C,D show cross-sections of these surfaces, along with numerically determined limit point bifurcations for these values of the parameters. \textbf{(B)} Cross-section when $k_1=0.6$. A co-dimension two bifurcation point, specifically a cusp point, is marked with a asterisk and labeled CP. \textbf{(C)} Cross-section when $k_2=0.8$. Again, a cusp point appears in the cross-section. \textbf{(D)} Cross-section when $\tau=1.75$.  The limit point and transcritical bifurcation curves divide disturbance space into a large number of regions, where each region has qualitatively distinct dynamics.  See Table \ref{table:stability_regions} for additional information.
}
\label{fig:bifthms}
\end{figure}

Using the methods described in Section \ref{sec:methods-numerical}, we find two types of bifurcations: transcritical bifurcations and limit point bifurcations.  There are three distinct transcritical bifurcations in the systems, involving the origin, a forest-only fixed point, and a grass-only fixed point.  Their locations are challenging to identify numerically for small values of $\tau$.  Analytic conditions for their locations are given in Theorems \ref{thm:grassbp}, \ref{thm:forestbp}, and Theorem 3.1 of \cite{tamen2016tree}.  

Figures \ref{fig:bifthms}BCD and \ref{fig:four tau k1 stab} show transcritical bifurcation curves as a dashed black line, a blue line, an a red line, respectively.  The transcritical bifurcation depicted by the dashed line is where a grass-only fixed point enters the first quadrant as a saddle.  This is the transition from region I to region II of Table \ref{table:stability_regions}.  The blue curves show the location of a transcritical bifurcation at $(0,1)$, where the forest-only state changes from stable to saddle and a coexistence state enters the first quadrant as a stable fixed point.  This is the transition from region IV to region VII or region V to region VI of Table \ref{table:stability_regions}. The red curves show the location of a transcritical bifurcation at $(x_g,0)$ where the grass-only state changes from saddle to stable, and a coexistence state enters the first quadrant as unstable fixed point.  Regions IV, VII, VIII of Table \ref{table:stability_regions} are inside the closed red loop made by cross-sections of this bifurcation surface.

In addition, we identify limit points curves.  The locations of these curves from numerical continuation are shown in black (solid lines) in Figures \ref{fig:bifthms}BCD and \ref{fig:four tau k1 stab}.  The fixed points associated with the limit point curve contradict the claim of a unique savanna tree-grass coexistence fixed point in \cite{tamen2016tree}. Regions V, VI, and VII each have more than one savanna tree-grass coexistence fixed point.

\subsection{Bifurcation Theorems} \label{sec:BifThms}

In Theorems \ref{thm:grassbp} and \ref{thm:forestbp} we find analytical expressions for a necessary condition for transcritical bifurcations at the grass-only and forest-only fixed points. These conditions are related to conditions presented in \cite{tamen2016tree}, match our numerical results for the bifurcations, and make it possible to identify the locations of bifurcations close to $k_1=\tau=0$.  The methods of proof are also based on ideas in \cite{tamen2016tree}.

In part of the $k_1k_2\tau$-parameter space,  $(0,0)$ and $(0,1)$ are the only fixed points of the system in the invariant first quadrant $\mathbb{R}_{\geq 0} \times \mathbb{R}_{\geq 0}$. This is region I in Figure \ref{fig:four tau k1 stab} and Table \ref{table:stability_regions}.  For parameter sets that meet a grass-existence condition, there also exists a fixed point of the flow kick map at \begin{equation}(x_g,0) = \left(\frac{1-(1-k_1)e^{\tau}}{1-e^{\tau}},0\right).\label{eqn:grass-only fixed point}\end{equation}  This fixed point crosses into positive $x$ through a transcritical bifurcation when \begin{equation} k_1=1-e^{-\tau}\label{eqn:grass-fp existence}\end{equation} (see Theorem 3.1 of \cite{tamen2016tree}). When $k_1 \leq 1-e^{-\tau}$,  $x_g$ is in $\mathbb{R}_{\geq 0} \times \mathbb{R}_{\geq 0}$, giving at least three fixed points of the system.  This condition is shown as a gray surface in Figure \ref{fig:bifthms}1 and as a dashed black line in Figures \ref{fig:bifthms}BCD.

The following two theorems give necessary conditions for tree-grass savanna coexistence fixed points where both $x$ and $y$ are nonzero.

\begin{theorem} \label{thm:grassbp}

A necessary condition for a transcritical bifurcation associated with a grass-tree coexistence fixed point crossing into the first quadrant, is met when \begin{equation} 1-e^{\delta\tau}\left(1-k_2\omega (\xi_g)\right)=0, \text{ with } \xi_g = \dfrac{k_1x_g}{1-k_1} \text{ for } k_1<1-e^{-\tau},\label{eqn:thmgrassbp}\end{equation}
at the point $(x_g^*,0)$ where $(x_g^*,0)$ is of the form in Equation \ref{eqn:grass-only fixed point} and $\omega(.)$ is defined in Equation \ref{eqn:omega}.
\end{theorem}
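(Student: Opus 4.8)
The plan is to linearize the flow-kick map at the grass-only fixed point $(x_g,0)$ and to identify the transcritical bifurcation with the passage of the transverse multiplier through $1$. First I would write the flow-kick map explicitly as $\Phi(x_0,y_0)=\bigl((1-k_1)X,\ (1-k_2\omega(k_1X))Y\bigr)$, where $(X,Y)=\phi(x_0,y_0;\tau)$ is the time-$\tau$ flow of the underlying system in Equation \ref{eqn:nondim}. The structural observation that drives everything is that the line $y=0$ is invariant under both the flow (since $\dot y=\delta y(1-y)$ fixes $y=0$) and the kick, so $(x_g,0)$ is a genuine fixed point of $\Phi$ and the second component $y_1$ vanishes identically along $y_0=0$.

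Because $y_1\equiv 0$ on the invariant axis, the Jacobian $D\Phi(x_g,0)$ is upper triangular with lower-left entry $\partial y_1/\partial x_0=0$, so its eigenvalues are the diagonal entries: an on-axis multiplier $\partial x_1/\partial x_0$ governing the grass-only dynamics, and a transverse multiplier $\partial y_1/\partial y_0$. A coexistence fixed point with $y>0$ can detach from the axis only when the transverse multiplier crosses $1$, since that is exactly the degeneracy $\det(D\Phi-I)=0$ contributed by the transverse eigenvalue, which is necessary for a new fixed-point branch to split off $(x_g,0)$. The proof thus reduces to computing $\partial y_1/\partial y_0$ at $(x_g,0)$ and setting it to $1$.

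To evaluate the transverse multiplier I would differentiate $y_1=(1-k_2\omega(k_1X))Y$ in $y_0$. At $(x_g,0)$ the pre-kick tree value is $Y=\phi_y(y_0;\tau)|_{y_0=0}=0$, so the term carrying $\omega'$ drops, leaving $\partial y_1/\partial y_0=(1-k_2\omega(k_1X))\,\partial Y/\partial y_0$. Since the $y$-equation is autonomous and decoupled from $x$, the flow $Y$ depends on $y_0$ alone, and its variational equation $\dot v=\delta(1-2y)v$ along the trivial trajectory $y\equiv 0$ integrates to $\partial Y/\partial y_0=e^{\delta\tau}$. Finally the fixed-point relation for $(x_g,0)$ forces the pre-kick grass value $X=x_g/(1-k_1)$, so $k_1X=k_1x_g/(1-k_1)=\xi_g$ and $\omega(k_1X)=\omega(\xi_g)$. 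Assembling these gives $\partial y_1/\partial y_0=e^{\delta\tau}(1-k_2\omega(\xi_g))$, and equating to $1$ reproduces Equation \ref{eqn:thmgrassbp} exactly.

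The main obstacle is conceptual rather than computational: justifying that the transverse-multiplier-equals-one condition is the right necessary condition for this transcritical bifurcation rather than some other local bifurcation. This is resolved by the invariance of $y=0$, which guarantees the axis always carries the grass-only fixed point, so the only way a coexistence branch can be born locally is through the transverse eigenvalue; the on-axis multiplier $\partial x_1/\partial x_0$ controls only the grass-only dynamics and is irrelevant to the crossing into $y>0$. A secondary care point is evaluating the flow derivatives cleanly, but the decoupling of the $y$-equation makes the variational calculation immediate, as above.
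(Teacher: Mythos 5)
Your proposal is correct and lands on exactly the paper's condition, but it is packaged differently enough to be worth comparing. The paper works directly with the fixed-point equation: it integrates the decoupled logistic equation to obtain an explicit formula for $V(x_g,y)$, forms $h(y) = 1 - V(x_g,y)/y$ to strip out the trivial root $y=0$, and identifies the bifurcation as the parameter condition at which the nontrivial root of $h$ passes through $y=0$ (i.e.\ $y=0$ becomes a double root of $y = V(x_g,y)$). You instead linearize: axis invariance makes $D\Phi(x_g,0)$ triangular, the $\omega'$ term dies because the pre-kick tree value $Y$ vanishes, and the variational equation along $y\equiv 0$ gives the transverse multiplier $e^{\delta\tau}\left(1-k_2\omega(\xi_g)\right)$, which you set equal to $1$. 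These are literally the same quantity: since $V(x_g,0)=0$, the paper's $h(0)$ equals $1 - \partial V/\partial y\,(x_g,0)$, so the two proofs differ in computation (variational equation versus explicit solution) rather than in substance. Your version avoids needing the closed-form logistic solution and states necessity in the standard eigenvalue-degeneracy form; the paper's version additionally tracks where the coexistence root sits near the bifurcation, exhibiting the crossing itself rather than only the degeneracy. One point you should tighten: $\det(D\Phi - I)=0$ alone says only that \emph{some} eigenvalue equals $1$, so to conclude it must be the transverse multiplier rather than the on-axis one $\partial x_1/\partial x_0$, the clean argument is precisely the paper's division device — along the coexistence branch $V(x(s),y(s))/y(s)\equiv 1$, and letting $s\to 0$ (so $y\to 0$) forces $\partial V/\partial y\,(x_g^*,0)=1$. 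Your appeal to the invariance of $y=0$ is the right intuition, and this one-line limit makes it airtight at a level of rigor matching the paper's own.
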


\begin{proof}
Let $\Phi(x,y,k_1,k_2,\tau)$ be the flow kick map such that
\begin{equation}
\label{eqn:flowkickmap}
\begin{bmatrix} x_{n+1} \\ y_{n+1}\end{bmatrix}=\Phi(x_n,y_n)=\begin{bmatrix} U(x_n,y_n)\\ V(x_n,y_n)\end{bmatrix}.\end{equation}

We assume that $k_1<1-e^{-\tau}$.  From equation \ref{eqn:grass-only fixed point}, a grass-only fixed point $(x_g,0)$ is defined by
\begin{equation}
\left\{\begin{array}{l}
    x_g=U(x_g,0)=\dfrac{1-(1-k_1)e^{\tau}}{1-e^{\tau}}\\
    0=V(x_g,0)\end{array}\right.
\end{equation}



A transcritical bifurcation occurs when the grass-only $(x_g,0)$ branch crosses a branch of fixed points of the form $(x^*,y^*)$ with $y^*$ not uniformly zero.  In other words, near the transcritical bifurcation, $y - V(x,y) = 0$ has a $y=0$ root, corresponding to the $(x_g,0)$ branch, and a root with $y$ almost always non-zero, corresponding to the $(x^*,y^*)$ branch.  The transcritical bifurcation occurs when these two branches cross so that the $y=0$ root of $y-V(x_g,y)=0$ has multiplicity two.  We construct a function $h(y)$ that reduces the multiplicity of the $y=0$ solution.

Let $h(y)=\dfrac{y-V(x_g,y)}{y}=1-\dfrac{V(x_g,y)}{y}$=0.
We give explicit formulas for $V(x,y)$ and $h(y)$. By integrating equation \ref{eqn:nondim}, and incorporating the impulses from equations \ref{eqn:grass} and \ref{eqn:trees}, we find
\begin{equation}
 V(x,y)=\left(1-k_2\omega(\xi)\right)\frac{e^{\delta\tau}y}{1+\delta y(-1+e^{\delta\tau})} \text{ where }\xi = k_1\frac{x}{1-k_1}.
\end{equation}
Given $x_g=\dfrac{1-(1-k_1)e^{\tau}}{1-e^{\tau}}$, $h(y)$ becomes
\begin{equation}
\label{eqn:hy}
h(y)=1-\dfrac{V(x_g,y)}{y}=1- \frac{e^{\delta\tau}\left(1-k_2\omega(\xi_g)\right)}{1+\delta y(-1+e^{\delta\tau})} \text{ where } \xi_g = k_1\dfrac{1-(1-k_1)e^\tau}{(1-e^\tau)(1-k_1)}.
\end{equation}

Setting Equation \ref{eqn:hy} to zero, we find a root of $h(y) = 0$  when $y = \dfrac{e^{\delta\tau}\left(1-k_2\omega(\xi_g)\right)-1}{\delta(-1+e^{\delta\tau})}$.  This crosses $y=0$ when $e^{\delta\tau}\left(1-k_2\omega(\xi_g)\right)-1=0$, giving a repeated root $y=0$ to the equation $y=V(x_g,y)$. Thus $e^{\delta\tau}\left(1-k_2\omega(\xi_g)\right)-1=0$ is a necessary condition for a transcritical bifurcation. \hfill$\square$

\end{proof}

The bifurcation condition in Theorem \ref{thm:grassbp}, conditioned on the grass-only fixed point being in $\mathbb{R}_{\geq 0} \times \mathbb{R}_{\geq 0}$, is shown as as a red surface in Figure \ref{fig:bifthms}A and as a red curve in Figures \ref{fig:bifthms}BCD.  The condition in Theorem \ref{thm:grassbp} is associated with a change in stability of the grass-only fixed point, and with a grass-tree coexistence fixed point entering or leaving $\mathbb{R}_{\geq 0} \times \mathbb{R}_{\geq 0}$.

\begin{theorem} \label{thm:forestbp}
A necessary condition for a transcritical bifurcation associated with a grass-tree coexistence fixed point crossing into the first quadrant at $(0,1)$, is met when  \begin{equation}1+e^{(1-\alpha)\tau}(-1+k_1)=0. \label{eqn:thmforestbp} \end{equation}
\end{theorem}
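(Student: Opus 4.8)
The plan is to mirror the proof of Theorem \ref{thm:grassbp}, but with the roles of the two variables exchanged: there the transverse (tree) direction was probed at a grass-only fixed point, whereas here I probe the transverse (grass) direction at the forest-only state $(0,1)$. Writing the flow-kick map as $\Phi(x,y)=(U(x,y),V(x,y))$ as in Equation \ref{eqn:flowkickmap}, I first note that grass cannot be generated from nothing: the line $x=0$ is invariant under both the flow of Equation \ref{eqn:nondim} and the kicks of Equations \ref{eqn:grass}--\ref{eqn:trees}, so $U(0,y)\equiv 0$ and $(0,1)$ is a fixed point for every disturbance regime. As in Theorem \ref{thm:grassbp}, I would strip off this trivial root by setting $\tilde h(x)=\dfrac{x-U(x,y)}{x}=1-\dfrac{U(x,y)}{x}$; the coexistence branch corresponds to $\tilde h=0$, and it crosses $(0,1)$ transcritically precisely when the $x=0$ root of $x-U(x,y)=0$ acquires multiplicity two there. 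This reduces the bifurcation condition to $\left.\partial U/\partial x\right|_{(0,1)}=1$.

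The key simplification — and the step I would lead with — is that although $U(x,y)$ has no closed form in general, since the grass equation $\dot x = x(1-x-\alpha y)$ is coupled to $y$, I only need it along $y=1$. Because $y=1$ is a fixed point of the decoupled logistic $\dot y=\delta y(1-y)$, a trajectory launched from $(x_0,1)$ keeps $y(t)\equiv 1$ throughout $[0,\tau]$, and at $x=0$ the grass-burn modulation $\omega$ is irrelevant. The grass flow therefore collapses to the autonomous logistic $\dot x = x\big((1-\alpha)-x\big)$, with effective linear rate $1-\alpha$. Integrating over the period $\tau$ and applying the proportional grass kick $x\mapsto(1-k_1)x$ gives $U(x_0,1)=(1-k_1)\,x(\tau)$; linearizing at $x_0=0$ yields the return multiplier $\left.\partial U/\partial x\right|_{(0,1)}=(1-k_1)e^{(1-\alpha)\tau}$. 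Setting this equal to $1$ and rearranging gives $1+e^{(1-\alpha)\tau}(k_1-1)=0$, which is Equation \ref{eqn:thmforestbp}; note it is independent of $k_2$, consistent with Figure \ref{fig:bifthms}A.

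The one place needing care — and the main obstacle to making the argument airtight — is justifying that the transverse crossing can be detected while holding $y$ fixed at $1$. This is licensed by the triangular structure of the Jacobian of $\Phi$ at $(0,1)$: invariance of $x=0$ forces $\left.\partial U/\partial y\right|_{x=0}=0$, so the Jacobian is lower triangular with eigenvalues $\partial U/\partial x$ (the grass-invasion, transverse direction) and $\partial V/\partial y$ (tangent to the invariant line $x=0$). A short computation using the explicit $V$ from Theorem \ref{thm:grassbp} gives $\left.\partial V/\partial y\right|_{(0,1)}=e^{-\delta\tau}<1$, so the forest state is attracting along $x=0$ and the only eigenvalue that can pass through $1$ is the transverse one. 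Hence the eigenvalue-crossing $\left.\partial U/\partial x\right|_{(0,1)}=1$ is exactly the necessary condition claimed. As in Theorem \ref{thm:grassbp}, I would leave the genuineness of the bifurcation and the direction in which the coexistence branch enters the first quadrant to the Hessian-based numerical continuation of Section \ref{sec:numbifn}, since the theorem asserts only a necessary condition.
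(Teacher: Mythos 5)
Your proposal is correct and takes essentially the same approach as the paper: both exploit the fact that $y\equiv 1$ is preserved by the flow, so that $U(x,1)$ is computed explicitly as a logistic flow with rate $1-\alpha$ followed by the proportional kick $(1-k_1)$, and both identify the transcritical bifurcation with the $x=0$ root of $x-U(x,1)=0$ acquiring multiplicity two. Your eigenvalue formulation $\left.\partial U/\partial x\right|_{(0,1)}=(1-k_1)e^{(1-\alpha)\tau}=1$ is exactly the condition that the paper's function $g$ satisfies $g(0)=0$, so the two derivations coincide; your additional observation that the Jacobian at $(0,1)$ is triangular with tangential eigenvalue $e^{-\delta\tau}<1$ is a small strengthening the paper omits.
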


\begin{proof}
For all values of the disturbance parameters, $(0,1)$ is a forest-only fixed point of the system, i.e. 
there is a forest only fixed point when 
\begin{equation}
\left\{\begin{array}{l}
    0=U(0,1)\\
    1=V(0,1)\end{array}\right.
\end{equation}

A transcritical bifurcation involving the forest-only state occurs when the $(0,1)$ branch of fixed points crosses a branch of fixed points of the form $(x^*,y^*)$ with $x^*$ not uniformly zero. Near the transcritical bifurcation $x-U(x,y)=0$ has a $x=0$ root, corresponding to the $(0,1)$ branch, and a root with $x$ almost always non-zero, corresponding to the $(x^*,y^*)$ branch. The transcritical bifurcation occurs when these two branches cross so that the $x=0$ root of $x-U(x,1)=0$ has multiplicity two. We construct a function $g(x)$ that reduces the multiplicity of the $x=0$ solution.

Let $g(x)=\dfrac{x-U(x,1)}{x}=1-\dfrac{U(x,1)}{x}$=0. 
We give explicit formulas for $U(x,1)$ and $g(x)$
Setting $y = 1$, integrating equation \ref{eqn:nondim}, and incorporating the impulses in equations \ref{eqn:grass} and \ref{eqn:trees}, we find
\begin{equation}
    U(x,1)=\frac{(1-\alpha)e^{(1-\alpha)t}(1-k_1)x}{1-\alpha-x+e^{(1-\alpha)t}x}
\end{equation}
Then 
\begin{equation}
\label{eqn:gx}
g(x)=1-\frac{U(x,1)}{x}=1-\frac{(1-\alpha)e^{(1-\alpha)t}(1-k_1)}{1-\alpha+x\left(e^{(1-\alpha)t}-1\right)}
\end{equation}

Setting Equation \ref{eqn:gx} to zero, we find $x = 1-e^{(1-\alpha)t}(1-k_1)$ is a root of $g(x)=0$. This crosses $x=0$ when 
$1-e^{(1-\alpha)t}(1-k_1)=0$, giving a repeated root $x=0$ to the equation $x=U(x,1)$. Thus $1-e^{(1-\alpha)t}(1-k_1)=0$ is a necessary condition for a transcritical bifurcation.  \hfill$\square$
\end{proof}

Note that in Theorem \ref{thm:forestbp}, we show that the location of the bifurcation curve varies with the indicators of fire intensity for grass, $k_1$, and fire frequency, $\tau$, but does not depend on the indicator of fire intensity for trees, $k_2$.

The bifurcation condition in Theorem \ref{thm:forestbp} is shown as a blue surface in Figure \ref{fig:bifthms}A and as a blue curve in Figures \ref{fig:bifthms}BCD.  Crossing this condition is associated with a change in stability of the forest-only fixed point, and with a grass-tree coexistence fixed point entering or leaving $\mathbb{R}_{\geq 0} \times \mathbb{R}_{\geq 0}$.


\begin{figure}
    \centering
  \includegraphics{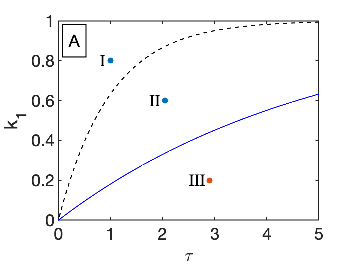}
    \includegraphics{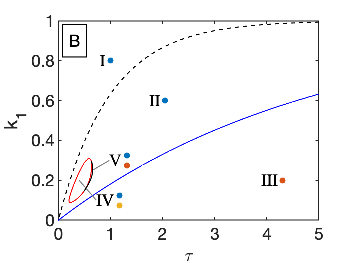}

  \includegraphics{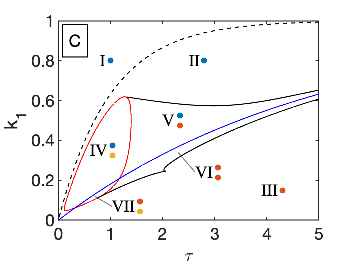}
    \includegraphics{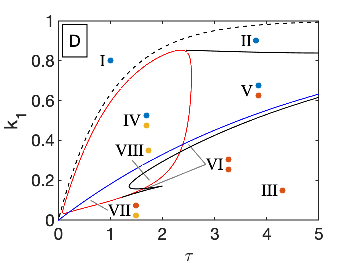}\\
    
    \begin{tabular}{|c|l|}
     \hline
     \tikz\draw[matlabcolor1,fill=matlabcolor1] (0,0) circle (.5ex); & stable $(0,1)$ forest-only fixed point \\ \hline
      \tikz\draw[matlabcolor3,fill=matlabcolor3] (0,0) circle (.5ex); & stable $(x_g,0)$ grass-only fixed point \\ \hline
     \tikz\draw[matlabcolor2,fill=matlabcolor2] (0,0) circle (.5ex); & stable coexistence fixed point 
     \\ \hline
\end{tabular}
    \caption{These stability diagrams show transcritical and limit point bifurcation locations, as well as the type of stable fixed point(s) in each regions, in the $\tau k_1$-parameter space for $k_2=0.2,0.4,0.6$, and $0.8$ in A,B,C,D respectively. See Table \ref{table:stability_regions} for a list of fixed points in each stability region.
    \textbf{(A)} $k_2=0.2$. The dashed black curve is a transcritical bifurcation of the origin (a zero biomass state) and a grass-only fixed point at $(x_g,0)$ as given in \eqref{eqn:grass-fp existence}. The blue curve is a transcritical bifurcation of the forest only state with a coexistence (savanna) state. At this curve, we see the stable forest from above has switched to a stable coexistence savanna state below. These two curves do not depend on $k_2$ and appear in B,C, and D as well. For any parameter set, there is one stable fixed point for $k_2=0.2$.
    \textbf{(B)} $k_2=0.4$ For $k_2>0.349$, the red surface in Figure \ref{fig:bifthms}A will occur as the red bifurcation curve. Along with a small limit point curve shown in black, this grass bifurcation gives two new stability regimes, both of which have bistability. 
    \textbf{(C)} $k_2=0.6$ As $k_2$ increases, the limit point and grass (red) bifurcation curves encompass a larger region of parameter space. Because they cross the blue (forest) bifurcation there are two additional stability regimes, which both have bistability. One includes two stable coexistence states. 
    (D) $k_2=0.8$ The limit point curve has changed shape adding one more stability case. Figure \ref{fig:PhasePortraits_for_6C} and Figure \ref{fig:PhasePortraits_for_6D} give examples of phase portraits for each stability region in figures C and D.}
    \label{fig:four tau k1 stab}
\end{figure}

\begin{table}
\centering
 \begin{tabular}{|c|p{1.5cm}|p{1.5cm}|p{4cm}|}
     \hline
     region  & tree-only fixed point & grass-only fixed point & coexistence fixed points \\
     \hline
     I & \tikz\draw[matlabcolor1,fill=matlabcolor1] (0,0) circle (.5ex); stable & \hspace{1.2ex} - & \hspace{1.2ex} -\\ \hline
     II &\tikz\draw[matlabcolor1,fill=matlabcolor1] (0,0) circle (.5ex);  stable &  \hspace{1.2ex} saddle & \hspace{1.2ex} -\\ \hline
     III & \hspace{1.2ex} saddle &  \hspace{1.2ex} saddle & \tikz\draw[matlabcolor2,fill=matlabcolor2] (0,0) circle (.5ex); stable \\ \hline
     \textbf{IV}  & \tikz\draw[matlabcolor1,fill=matlabcolor1] (0,0) circle (.5ex); \textbf{stable} & \tikz\draw[matlabcolor3,fill=matlabcolor3] (0,0) circle (.5ex); \textbf{stable} &   \hspace{1.2ex} saddle \\ \hline
     \textbf{V} &  \tikz\draw[matlabcolor1,fill=matlabcolor1] (0,0) circle (.5ex); \textbf{stable} &  \hspace{1.2ex} saddle &  \tikz\draw[matlabcolor2,fill=matlabcolor2] (0,0) circle (.5ex); \textbf{stable},  saddle \\ \hline 
     \textbf{VI} &  \hspace{1.2ex} saddle &  \hspace{1.2ex} saddle & \tikz\draw[matlabcolor2,fill=matlabcolor2] (0,0) circle (.5ex); \textbf{stable}, \tikz\draw[matlabcolor2,fill=matlabcolor2] (0,0) circle (.5ex); \textbf{stable},   saddle \\ \hline
     \textbf{VII}  & \hspace{1.2ex} saddle & \tikz\draw[matlabcolor3,fill=matlabcolor3] (0,0) circle (.5ex); \textbf{stable} & \tikz\draw[matlabcolor2,fill=matlabcolor2] (0,0) circle (.5ex); \textbf{stable},  saddle \\ \hline
     VIII  &  \hspace{1.2ex} saddle & \tikz\draw[matlabcolor3,fill=matlabcolor3] (0,0) circle (.5ex); stable & \hspace{1.2ex} - \\ \hline \end{tabular}
\caption{The nontrivial fixed points and their stabilities for all eight regions from Figure \ref{fig:four tau k1 stab} is shown here. There are three types of fixed points: tree-only (forest) at $(0,1)$, grass-only (grassland) at $(x_g,0)$ where $x_g$ is determined as in \eqref{eqn:grass-only fixed point}, and coexistence (savanna) fixed points with nontrivial values of both $x$ and $y$. The coexistence values can range greatly in $x$ and $y$, and are examined further in the discussion. Additionally, this table shows there are four different types of bistability in this model (regions and bistable points are in bold): forest-grassland (Region IV), forest-savanna (Region V), savanna-savanna (Region VI) and grassland-savanna (Region VII).}
\label{table:stability_regions}
\end{table}

\subsection{Stability Diagrams}\label{sec:stabdiagrams}

Examining the impact of different disturbance regimes on the long term behavior of the system involves varying $k_1$, $k_2$, $\tau$.  The possible long term behaviors of the system are summarized in the stability diagrams shown in Figure \ref{fig:four tau k1 stab} and in Table \ref{table:stability_regions}.  In the figure, each stability region is labeled with a numeral and colored dots showing which fixed points are stable. Table \ref{table:stability_regions} gives a full description of all fixed points (stable or unstable) in each stability region. A representative phase portrait for each region of $k_1\tau$-space for $k_2 = 0.6$ and of $k_1\tau$-space for $k_2 = 0.8$ is included in the Appendix, in Figures \ref{fig:PhasePortraits_for_6C} and
\ref{fig:PhasePortraits_for_6D} respectively.

In Figure \ref{fig:four tau k1 stab}, we show diagrams in $k_1\tau$-space for four values of $k_2$ where $k_2$ is the potential maximum impact of the fire on the tree population.  At low potential impacts ($k_2 = 0.2$), in Figure \ref{fig:four tau k1 stab}A, no bistability regions exist, meaning there is a unique long-term behavior at each $k_1\tau$ parameter set.  Region III has a tree-grass coexistence fixed point as the long term behavior, while other regions have a forest-only state as the long term behavior.

The existence of the grass-only bifurcation curve and limit point curves at higher values of $k_2$ correspond to bistable regions of parameter space, where there are multiple possible long term behaviors for the system, based on its initial state.  The size of the bistable region in $k_1\tau$-space increases as $k_2$ increases.  These bistability regions can involve many different pairings of alternative stable states. We see bistability of grass-only and forest-only states (Region IV), of grass-only and tree-grass coexistence states (Region VII), of forest-only and tree-grass coexistence (Region V), and of two distinct tree-grass coexistence states (Region VI).  Fire frequency can be a determinant of which states are bistable.  A forest state is bistable with a grass-only state with more frequent fires (smaller values of $\tau$), as seen in region IV.  It is bistable with a tree-grass coexistence (savanna) state with less frequent fires, as seen in region V.  In the context of this model, observing a forest state, one would need to determine the disturbance regime to know the other possible stable state.

\subsection{Derivation of analogous continuous system}
\label{sec:ctsdisturb}

We derive a continuous system in the limit of small frequent disturbances.  Many models of savanna ecosystems use this kind of simple disturbance regime, rather than using an impulsive disturbance model \cite{accatino2010tree,beckage2009vegetation,staver2011tree,touboul2018complex}.  By building a comparable continuous system, we are able to show that many of the bistability and coexistence cases that exist in our discrete model, are not present in its continuous analog.


We compare systems with discrete disturbance to a system with continuous disturbance by considering the average disturbance over a period of time, and holding this constant. Fix $r_1 = k_1/\tau$, a measure of fire intensity, in a fixed ratio as $\tau\to 0$.  This measure increases with increasing $k_1$, the proportion of grass biomass that burns.  It also increases with more frequent fires, corresponding to decreasing $\tau$. 

\begin{lemma}
For the impulsive model defined by equations \ref{eqn:nondim}, \ref{eqn:grass}, \ref{eqn:trees}, the continuous analog system generated by fixed fire intensities $r_1 = k_1/\tau$ and $r_2 = k_2/\tau$ is \begin{equation}
    \begin{split}
 \label{eqn:ctssystemresult}
     \frac{dx}{dt}&=x(1-r_1-x-\alpha y)\\
           \frac{dy}{dt}&=\delta y(1-y).
    \end{split}
\end{equation}
\end{lemma}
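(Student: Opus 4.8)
The plan is to treat one full flow-kick step of the impulsive map as an increment over a time interval of length $\tau$, expand the flow to first order in $\tau$, substitute the scaling relations $k_1 = r_1\tau$ and $k_2 = r_2\tau$, form the difference quotient of successive post-kick states, and let $\tau \to 0$. Intuitively this realizes the ``average disturbance'' heuristic: over each period the continuous vector field contributes a displacement of order $\tau$ while the kick removes a proportion of order $k_i = r_i\tau$, so both contribute at the same order and survive the limit as additive rate terms.

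Concretely, I would start from the time-$\tau$ flow of the underlying system \ref{eqn:nondim}, written to first order as $x(n\tau^-) = x_n + \tau\, x_n(1-x_n-\alpha y_n) + O(\tau^2)$ and similarly $y(n\tau^-) = y_n + \tau\,\delta y_n(1-y_n) + O(\tau^2)$, where $(x_n,y_n)$ denotes the post-kick state at the start of the step. Applying the grass kick \ref{eqn:grass} gives $x_{n+1} = (1-k_1)\,x(n\tau^-)$. Substituting $k_1 = r_1\tau$ and expanding, $x_{n+1} = x_n + \tau\big[x_n(1-x_n-\alpha y_n) - r_1 x_n\big] + O(\tau^2)$, so that $(x_{n+1}-x_n)/\tau \to x(1-r_1-x-\alpha y)$ as $\tau \to 0$, which is the first equation of \ref{eqn:ctssystemresult}.

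The only real subtlety is the tree equation, and it is where the scaling of the switching function matters. The tree kick \ref{eqn:trees} multiplies $y(n\tau^-)$ by $1 - k_2\,\omega(k_1 x(n\tau^-))$. Since $x(n\tau^-) = x_n + O(\tau)$ and $k_1 = r_1\tau$, the argument of $\omega$ is $k_1 x(n\tau^-) = r_1\tau\,x_n + O(\tau^2)$, which tends to $0$. Because $\omega(\xi) = \xi^2/(a^2+\xi^2) = O(\xi^2)$ near the origin, we get $\omega(k_1 x(n\tau^-)) = O(\tau^2)$, and hence $k_2\,\omega(k_1 x(n\tau^-)) = r_2\tau \cdot O(\tau^2) = O(\tau^3)$. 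Thus the tree mortality is of higher order than the flow contribution and the kick factor is $1 + O(\tau^3)$; forming $(y_{n+1}-y_n)/\tau$ and letting $\tau \to 0$ leaves only $\delta y(1-y)$, the second equation of \ref{eqn:ctssystemresult}. This also explains why $r_2$ does not appear in the continuous analog.

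The main obstacle is making the limit rigorous rather than merely formal: one must justify the first-order flow expansion uniformly on the relevant compact region (which follows from smoothness of the vector field in \ref{eqn:nondim}) and, most importantly, track the order of the $\omega$ term carefully. The quadratic vanishing of $\omega$ at the origin is exactly what collapses the tree disturbance in the limit, so the whole conclusion hinges on that observation rather than on any cancellation between the flow and the kick.
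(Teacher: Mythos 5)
Your proposal is correct and follows essentially the same route as the paper: Taylor-expand the flow-kick map to first order in $\tau$, substitute the scalings $k_1 = r_1\tau$ and $k_2 = r_2\tau$, form the difference quotient of successive post-kick states, and let $\tau \to 0$. The only cosmetic difference is in how the tree-mortality term is discarded: the paper needs only continuity of $\omega$ and $\omega(0)=0$, since $k_2\,\omega\!\left(k_1 x(n\tau^-)\right)/\tau = r_2\,\omega\!\left(r_1\tau\, x(n\tau^-)\right) \to r_2\,\omega(0) = 0$, whereas you invoke the quadratic vanishing of $\omega$ at the origin to get an $O(\tau^3)$ estimate --- a stronger fact than necessary, but certainly valid.
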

\begin{proof}Let $\Phi(x,y,k_1,k_2,\tau)$ be the flow kick map such that
\begin{equation}
\label{eqn:flowkickmap2}
\begin{bmatrix}x_{n+1}\\y_{n+1}\end{bmatrix}=\Phi(x_n,y_n)=\begin{bmatrix} U(x_n,y_n)\\ V(x_n,y_n)\end{bmatrix}.\end{equation}
To find the continuous system, we take the limit of the difference quotient as $\tau \to 0$
\begin{equation}
    \begin{split}
 \label{eqn:ctssystem}
    \frac{dx}{dt} = \lim\limits_{\tau\to0} \frac{U(x,y)-x}{\tau} \\
    \frac{dy}{dt} = \lim\limits_{\tau\to0} \frac{V(x,y)-y}{\tau}
    \end{split}
\end{equation}

$U$ and $V$ are the flow kick map associated with equation \ref{eqn:nondim} and the disturbances in equations \ref{eqn:grass} and \ref{eqn:trees}.  Their Taylor expansions are given by $U(x,y) = (x + \tau f(x,y) + \mathcal{O}(\tau^2))(1-k_1)$ and $V(x,y) = (y + \tau g(x,y) + \mathcal{O}(\tau^2))(1-k_2\omega(k_1\phi_{\tau}(x,y))$, with $f, g$ from equation \ref{eqn:nondim} and $\phi_{\tau}$ the flow associated with $f$.
We have 
\begin{align*}
    \lim\limits_{\tau\to0} \frac{U(x,y)-x}{\tau} &=\lim\limits_{\tau\to0}\frac{(x+\tau f(x,y)+\mathcal{O}(\tau^2))(1-r_1\tau)-x}{\tau} \\
    &=-r_1x+f(x,y) 
\end{align*}
and
\begin{align*}
    \lim\limits_{\tau\to0} \frac{V(x,y)-y}{\tau} &=\lim\limits_{\tau\rightarrow 0}\frac{(y+\tau g(x,y) + \mathcal{O}(\tau^2))\left(1-r_2\tau\omega(r_1\tau\phi_\tau(x,y))\right) - y}{\tau} \\
    &= g(x,y)-r_2 \omega\left(0\right)y \\
&= g(x,y)
\end{align*}
The analogous continuous disturbance system is thus
\begin{equation}
    \begin{split}
 \label{eqn:ctssystem2}
     \frac{dx}{dt}&=x(1-r_1-x-\alpha y)\\
           \frac{dy}{dt}&=\delta y(1-y).
    \end{split}
\end{equation}
\hfill$\square$
\end{proof}

The impact of fire manifests as a $-r_1x$ term in the grass dynamics and does not appear in the tree dynamics.  This is because the argument to $\omega(.)$ is $k_1x$ and we have assumed $k_1\rightarrow 0$ as $\tau \rightarrow 0$.

However, if we do not consider the analogous continuous disturbance, but instead look for a continuous system that is more structurally analogous to the flow-kick system, we can assume the fire impact is $r_2 \omega(x)$, where $r_2 = k_2/\tau$ is held in a fixed ratio, so that tree mortality is dependent on the amount of grass present.  In this case, we find
\begin{equation}
    \begin{split}
 \label{eqn:ctssystem3}
     \frac{dx}{dt}&=x(1-r_1-x-\alpha y)\\
           \frac{dy}{dt}&=\delta y(1-y)-r_2\omega(x)y.
    \end{split}
\end{equation}

In addition, we can assume the fire impact is $r_2 \omega(r_1x)$, where $r_2 = k_2/\tau, r1 = k_1/\tau$ are held in a fixed ratio, so that tree mortality is dependent on the amount of grass burned ($r_1x$). Then the continuous system is  \begin{equation}
    \begin{split}
 \label{eqn:ctssystem4}
     \frac{dx}{dt}&=x(1-r_1-x-\alpha y)\\
           \frac{dy}{dt}&=\delta y(1-y)-r_2\omega(r_1 x)y.
    \end{split}
\end{equation}
This system has a similar $\omega$ input to the flow-kick system (although it is not the continuous analog of that system).

The system in equation \ref{eqn:ctssystem2} is a one-sided Lotka-Volterra system with very similar dynamics to the original underlying system.  It does not show the new behaviors that are present in the impulsive system.  The bifurcation structure in the $r_1r_2$-plane is shown for equation \ref{eqn:ctssystem3} in figure \ref{fig:ctsbifn}A and is shown for equation \ref{eqn:ctssystem4} in figure \ref{fig:ctsbifn}B.  Comparing figure \ref{fig:ctsbifn}B to figure \ref{fig:bifthms}B, these bifurcation curves are analogous to the bifurcations seen in the $k_1k_2$-plane at a fixed value of $\tau$.  Note that the small difference between these systems ($\omega(x)$ vs $\omega(r_1x)$ in the fire disturbance) leads to large qualitative differences in the bifurcation structure.  These systems are discussed further in section \ref{sec:ctsdistrubdiscuss} of the discussion.

\begin{figure}
\centering
\includegraphics[]{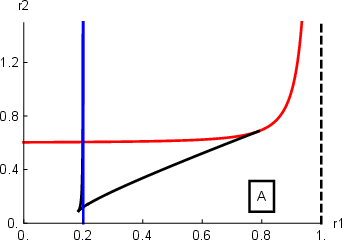} 
\includegraphics[]{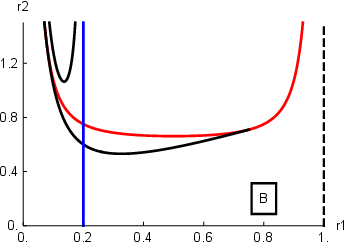}
\includegraphics[trim = .1in 1.5in .1in .55in, clip]{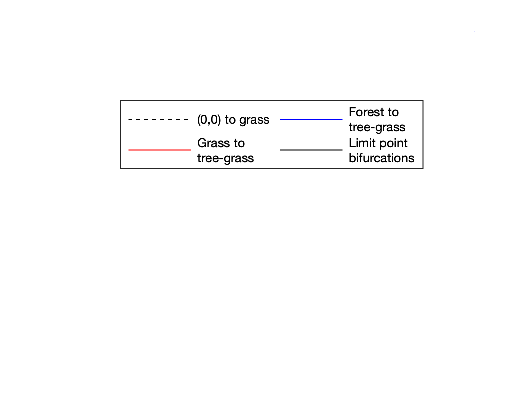}
\caption{ 
\textbf{(A)} shows the bifurcation structure for Equation 
\ref{eqn:ctssystem3}, where the continuous tree disturbance is given by $-r_2\omega(x)$, while \textbf{(B)} shows the bifurcation structure for Equation \ref{eqn:ctssystem4}, where the continuous tree disturbance is given by $-r_2\omega(r_1x)$.
The dashed black line indicates the transcritical bifurcation where a nontrivial grass-only fixed point comes into existence, the red curve is the transcritical bifurcation marking the grass-only to coexistence transition, and in blue the transcritical bifurcation marking the transition from stable forest to stable coexistence. Note that both the grass-only branch and forest to coexistence transcritical bifurcations are not dependent on $r_2$. The black curves correspond to saddle-node or fold bifurcations. These figures can be compared to Figure \ref{fig:bifthms}D, the $k_1k_2$-cross section, with fixed $\tau$, of the full impulsive system.}
\label{fig:ctsbifn}
\end{figure}

\section{Discussion} \label{sec:discussion}

We have a presented a complete bifurcation analysis in disturbance parameter space for a simple impulsive model of a savanna.  We find numerous regions of bistability, including different pairs of bistable states.  We also find bifurcation curves that are bounded away from the origin in disturbance space, meaning that they only occur in the impulsive version of the model.  The complexity of the stability diagram in figure \ref{fig:four tau k1 stab}D shows that substantial variation can occur in the long term behavior with small differences in disturbance parameters.

In this discussion, we compare the predictions of this model to those of related continuous models.  We also examine the resilience of the savanna ecosystem to different disturbance regimes.  Using a socially valued function of the system state, we assess the impact of small changes in the disturbance regime on the socially valued quantity.  We also consider whether this analysis, which is focused on long term behavior, extends to predictions relevant on shorter timescales.

 



\subsection{Comparison to continuous system} \label{sec:ctsdistrubdiscuss}

The continuous system in equation \ref{eqn:ctssystem2} is directly analogous (via a small fire interval limit) to the flow-kick system.  This continuous analog system has Lotka-Volterra type behavior and exhibits none of the bistability of the flow-kick system.  However, by trying other versions of the continuous fire disturbance that include a Hill function, as in equations \ref{eqn:ctssystem3} and \ref{eqn:ctssystem4}, we are able to find bifurcations similar to those seen in the $k_1k_2$-plane (for fixed $\tau$) in the flow-kick system.  These continuous systems do not have a direct mathematical relationship to the impulsive system that we investigated. Note that equation \ref{eqn:ctssystem4} is closest to those in the literature. For example \cite{staver2011tree,touboul2018complex} both make use of a Hill function with argument of only the amount of grass present in the system, similar to $\omega(x)$. (This is used to temper the growth of tree saplings to adult trees, based on the assumption of greater mortality due to fire of saplings compared to trees.)

The similarities between the bifurcation structure of equation \ref{eqn:ctssystem4} (see figure \ref{fig:ctsbifn}) and the flow-kick system suggest that the dynamical roles of $r_1$ and $r_2$ in these continuous systems are similar to those of $k_1$ and $k_2$.  However, $r_1$ and $r_2$ ostensibly combine fire intensity and frequency in a single variable.  Based on this comparison of bifurcation structures, capturing the impact of fire frequency seems to require using the full impulsive system, rather than a continuous variant.

\begin{figure}
  \includegraphics{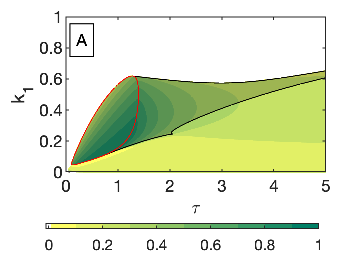}
 \includegraphics[]{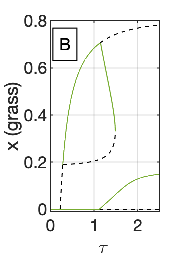}
 \includegraphics[]{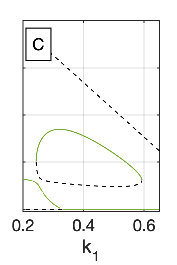}
\caption{\textbf{(A)} The amount of grass, $x$, associated with a stable fixed point of the system is indicated by color (see legend). The region of the space with bistable fixed points is shaded slightly grey and surrounded by the black lines and red curve. In this bistable region, there are two potential grass values, and the color corresponds to the higher grass value. 
\textbf{(B)} This is a bifurcation diagram in $\tau$ for $k_1=0.2$ and $k_2=0.6$. The values of $x$ (grass) associated with stable fixed points are shown in green and associated with saddle points in dashed black.  The sharp transition between high and low grass values that is visible along the black limit point curve in (A) corresponds to the limit point bifurcation at $\tau=1.49$ in (B). Crossing this transition by increasing $\tau$, starting from a high grass state, (for fixed $k_1$ and $k_2$) would be irreversible, with $x$ remaining in the low grass state when $\tau$ returned to a value below the limit point.
\textbf{(C)} This is a bifurcation diagram showing $x$ and $k_1$ for $\tau=2$ and $k_2=0.6$.  Similarly to (B), a green curve denotes the $x$ value for stable fixed points and black denotes it for saddle points.  Also similarly to (B), adjusting $k_1$ (the fire mortality of grass) in either direction with $\tau$ and $k_2$ fixed, starting from the high grass state, would result in a transition to a low grass state.  That transition would not reverse if the parameter value was restored to its original value.
}
\label{fig:discussionfig1}       
\end{figure}

\begin{figure}
  \includegraphics{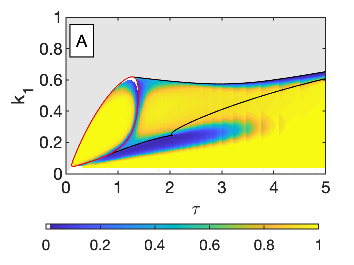}
  \includegraphics{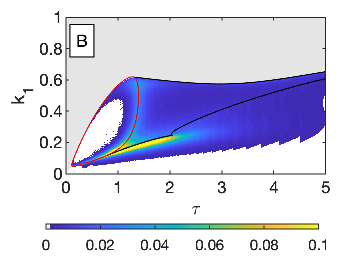}

\caption{
We examine the difference between long-term behavior and shorter-term behavior by comparing the equivalent of $30$ years of flow-kick iterations to the limiting behavior of the system as time goes to infinity.  \textbf{(A)} For each disturbance parameter set ($k_2 = 0.6$, $k_1, \tau$ vary), we consider all initial conditions within the basin of attraction for the long term state shown in \ref{fig:discussionfig1}A.  We find the proportion of initial conditions that have a grass value after $30$ years that is within $5$\% of the long-term value.  The colormap indicates this proportion.  \textbf{(B)} Using the same set of initial conditions as in (A), we calculate the distance between the $30$ year grass value and the long-term value.  The colormap is indicating the mean value of these distances. 
}
\label{fig:discussionfig2}       
\end{figure}

\subsection{Resilience analysis} \label{sec:resanalysis}

One motivation for studying a savanna ecosystem via a model is to identify how vulnerable the system would be to changes, such as increased fire intensity or increased fire frequency, that may be associated with climate change or other forcing. We use the term resilience to refer to the effect of a change in disturbance on some valued aspect of the system.  There are a number of ways to assign value to an ecosystem.  Some of these valuations are functions of the system state; we call such functions socially valued functions.  To examine resilience, we study how a socially valued function of the system state will vary with changes in fire disturbance parameters.  In our example below, we will show that small changes in the fire intensity or fire frequency lead to irreversible changes in the amount of grass, $x$, present in the system, meaning that this socially valued function is not resilient to those small changes in disturbance.



The amount of grass, $x$, may be of interest when a savanna is being valued for its grazing potential.  We choose to use the long-term value of $x$ just before the disturbance is applied as a measurement of the amount of grass.  Other choices, such as time-averaging over a cycle, or using the amount of grass just after the disturbance is applied, would yield a qualitatively similar analysis.  Rather than assigning a social value, $f(x)$, we set the social value to $x$ itself for this example.  At disturbance parameter sets with bistability, we choose the highest grass state available as the social value.  The socially valued state is shown in Figure 
\ref{fig:discussionfig1}A.

Of note in this figure is the stark transition from low to high social value along the curve of saddle-node bifurcations to the left of the cusp point. In \ref{fig:discussionfig1}B and C, we provide bifurcation diagrams that illustrate a sharp transition in the social value that is not reversible.  For example, starting at a fire interval of $\tau = 1$ and increasing the fire interval to $\tau = 2$, leads to a marked decrease in $x$.  Returning to $\tau = 1$, the social value is trapped on the lower curve, where $x = 0$, so value is not restored. This irreversibility is more extreme than hysteresis, as the valued state of the system is simply not accessible via adjusting the fire interval.

For this example of a socially valued function (the amount of grass), the bistability region defines a disturbance parameter space within which the socially valued function exhibits resilience to parameter variation.  The system state is not resilient to a shift that leads to a disturbance regime outside of bistability region, in that restoring the old disturbance regime does not restore the system state. Under this model, woody encroachment due to a change in disturbance would require more than just restoring the original fire regime. It would also require actively removing trees and promoting grass to return the system to the basin of attraction of the high grass state.

\subsection{Analyzing the system on decadal timescales} \label{sec:timescale}
Our previous analysis in this paper has relied upon identifying the long term behavior of the system.  However, we must take into account the possibility of transient timescales \cite{morozov2020long} as the timescales of interest for savanna-management are not infinite.  Here we consider a 30-year timescale to examine the relationship between medium term transient behavior (the 30-year behavior) and the long-term behavior of the system. We see that for $k_2=0.6$, in the majority of $\tau k_1$-parameter space, the 30 year transient behavior varies minimally from the long term behavior, with the exception of parameter sets near the bifurcation curves where dynamics are naturally slower.

In Figure \ref{fig:discussionfig2}, there are two visualizations comparing the highest pre-kick grass value in a periodic equilibrium of the impulsive system with the highest grass value after a decadal time scale close to 30 years. Thus we can conclude that the long term behavior analysis (which is easier to compute) gives a good proxy for a shorter time transient analysis in this model.

\subsection{Contextualization of results} \label{sec:context}

In this section we compare the results of our model to other models cited in this paper.

In field observations, savanna systems exhibit bistability between attracting states and have an attracting tree-grass coexistence state.  \cite{staver2011tree,staver2011global}. In modeling efforts from the early 2000s that were working to capture these observations, Beckage et al \cite{beckage2009vegetation} conclude that a grass-fire feedback in a two- or three-species ODE model does not lead to a stable savanna state where trees and grass coexist.  Rather, using a savanna tree-fire feedback stabilizes a savanna state in their three-species system.  Accatino et al \cite{accatino2010tree} incorporates soil moisture, along with a fire term, to stabilize a savanna state in a three dimensional model with two plant species and soil moisture.  In this work, we find that a two species tree-grass interaction model has stable savanna states with the introduction of impulsive disturbances.

In addition finding mechanisms that generate a stable savanna state, savanna modeling efforts also focus on capturing bistabilities in the system.  For example, Staver et al \cite{staver2011tree} use observations to calculate aerial tree cover in proportion to aerial grass cover in savanna regions.  They find that savannas and forests exist as alternative stable states and use a three-species ODE model to capture this result.  Other papers generate similar bistability results using stochastic fire feedbacks \cite{batllori2015minimal,baudena2010idealized}.  Some stochasticity, e.g. \cite{patterson2020probabilistic}, is also captured by continuum models like those in \cite{staver2011tree,touboul2018complex}.
The impulsive model presented in this paper not only has forest-savanna bistability, but forest-grassland bistability and bistability between woodier and grassier savanna co-existence states. Similar modeling work that uses a different parameterization of the impact of fires on trees, so as to capture age-effects in a two-species model \cite{yatat2021minimalistic,tamen2017minimalistic}, shows it is possible to generate a similar diversity of bistabilities with a somewhat different disturbance regime.

Spatially homogenized models, either continuous or stochastic, capture a mean-field average of trees and grass over the whole area of a savanna. Some, like the model in this paper, focus on biomass, while others model the fraction of coverage for different plants.  In contrast, spatially extended models have the potential to capture patterns of trees and grass within the savanna, as well as the boundaries between savanna and forest \cite{wuyts2019tropical,yatat2018spatially}. For example, in 
\cite{goel2020dispersal}, Goel et al create a reaction-diffusion model which reflects natural biome boundaries between savanna and forest, and conclude that biome recovery may be easier in a spatially extended model than it is in a nonspatial model. This is in stark contrast to the stronger-than-hysteresis tipping that we show exists in the impulsive model explored in this paper (see Figure \ref{fig:discussionfig1}). 


\subsection{Future Work} \label{sec:future}

This analysis of a simple one-sided Lotka-Volterra interaction with an impulsive disturbance focused on bifurcations in the disturbance space.  Understanding sensitivity of the predictions of the model to the structure of the underlying Lotka-Volterra interaction, including to parameters in that system, would extend these results to a broader range of savanna systems.  In addition, it would make it possible to separately assess the robustness of model predictions in the face of structural model errors, which are inevitable in ecological modeling.

It would be possible to focus further work using this model on questions of woody encroachment, as transitions between grassy and woody states are observed in the model. Creating a spatially extended impulsive model could also be of interest, particularly in light of the ease of biome recovery in spatial models. In addition, the analysis of transient behaviors could be placed within a more general framework.  Given that many analyses focus on long term behavior, expanding tools for understanding transient would be worthwhile.

\begin{acknowledgements}
Conversations with Mary Lou Zeeman and with Katherine Meyer provided valuable insights.  Part of this work was completed during A.H-L.'s sabbatical, which was partially supported by the Hutchcroft Fund and the Mathematics and Statistics Department at Mount Holyoke College.
\end{acknowledgements}
\bibliographystyle{spmpsci}
\bibliography{savanna}
\appendix
\section{Appendix: Phase Portraits}
\label{sec:appendix}

We construct a representative phase portrait for each region of figures \ref{fig:four tau k1 stab}C and \ref{fig:four tau k1 stab}D.  We use $\alpha = 0.8, \delta = 0.6$, $a = 0.08$, and indicate the values of $k_1, k_2,\tau$ in the title of each phase portrait.  Phase portraits for $k_2 = 0.6$ are shown in Figure \ref{fig:PhasePortraits_for_6C} and phase portraits for $k_2 = 0.8$ are shown in Figure \ref{fig:PhasePortraits_for_6D}.  Fixed points are indicated by filled circles.  Red circles are used for repelling fixed points, gray for saddle fixed points, and shades of green are used for stable fixed points.  The entire periodic equilibrium is shown as a curve attached to the flow-kick map fixed point.

Saddle fixed points have associated stable and unstable manifolds.  The stable manifold forms a basin boundary between different basins of attraction in the phase portrait.  Orbits of the flow-kick map quickly collapse to the unstable manifold, so most evolution of the system is along this manifold.  The flow region associated with the manifold is the shaded region in the center of each phase portrait.

\begin{figure}
    \hspace{-0.4in}\includegraphics[]{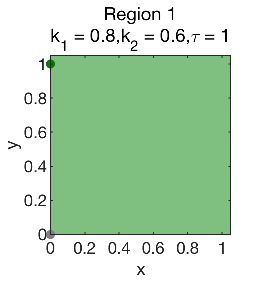}
    \includegraphics[]{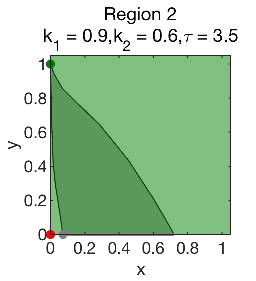}
    \includegraphics[]{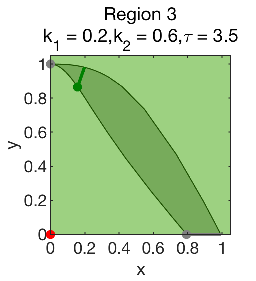}
    
    \hspace{-0.4in}\includegraphics[]{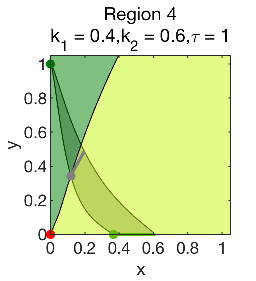}
    \includegraphics[]{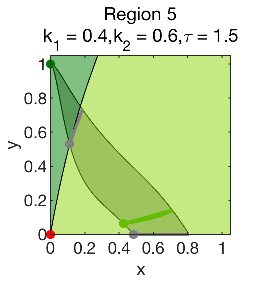}
    \includegraphics[]{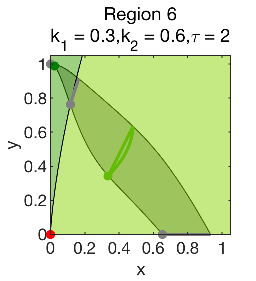}
    
    \hspace{-0.4in}\includegraphics[]{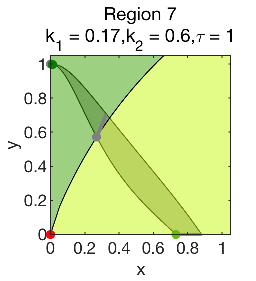}
\caption{Phase Portraits for stability regions with $k_2=0.6$.  The figures are described in the text of Appendix \ref{sec:appendix}}
\label{fig:PhasePortraits_for_6C}    
\end{figure}

\begin{figure}
   \hspace{-0.5in} \includegraphics{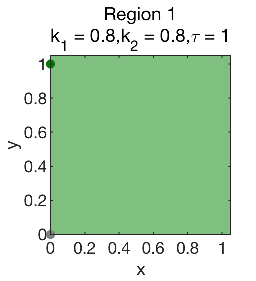}
    \includegraphics[]{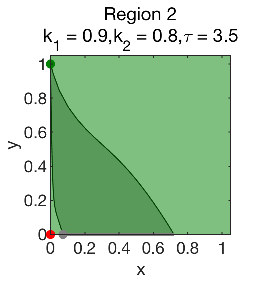}
    \includegraphics[]{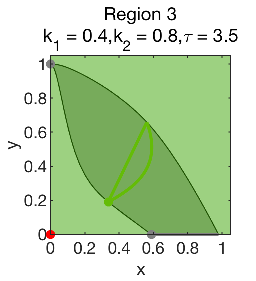}
    
   \hspace{-0.5in} \includegraphics[]{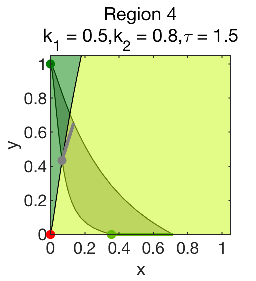}
    \includegraphics[]{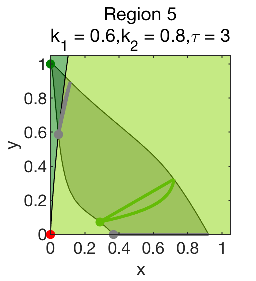}
    \includegraphics[]{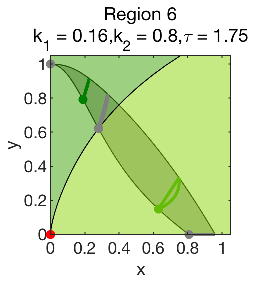}
    
    \hspace{-0.5in}
    \includegraphics[]{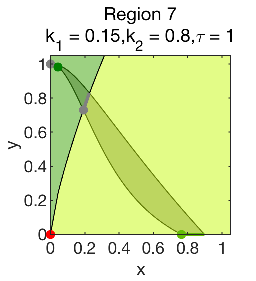}
    \includegraphics[]{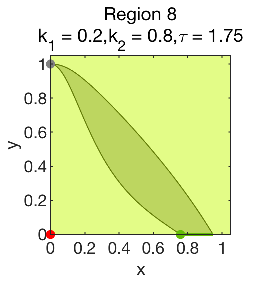}
    \caption{Phase Portraits for stability regions with $k_2=0.8$.  The figures are described in the text of Appendix \ref{sec:appendix}.}
    \label{fig:PhasePortraits_for_6D}
\end{figure}

%
%


%
%


\end{document}